\tikzstyle{rstate}=[state,ellipse]
\tikzset{>={latex}}
\spnewtheorem{observation}[definition]{Observation}{\bfseries}{\itshape}
\newcommand{\bigo}{{\mathcal{O}}}
\newcommand{\eps}{\epsilon}
\begin{document}

\title{Testing $k$-binomial equivalence\thanks{The results presented in this paper were partly obtained during the Dagstuhl seminar 14111, in March 2014. Dominik D. Freydenberger was supported by the DFG grant FR 3551/1-1. Pawe{\l} Gawrychowski is currently holding a post-doctoral position at Warsaw Center of Mathematics and Computer Science. Juhani Karhum\"aki was supported by Academy of Finland under the grant 257857. Florin Manea was supported by the DFG grant 596676. Wojciech Rytter was supported by the grant NCN2014/13/B/ST6/00770 of the Polish Science Center. }
}
\titlerunning{Testing $k$-binomial equivalence}

\author{Dominik D. Freydenberger\inst{1} \and Pawe{\l} Gawrychowski\inst{2,3} \and Juhani Karhum\"aki\inst{4} \and \mbox{Florin Manea\inst{5}} \and Wojciech Rytter\inst{2}}
\authorrunning{Freydenberger et al.}

\institute{
Institute for Computer Science, University of Bayreuth, Germany\\ \email{ddfy@ddfy.de}\\
\and Institute of Informatics, University of Warsaw, Poland\\ \email{$\{$gawry,rytter$\}$@mimuw.edu.pl}\\
\and Institute of Computer Science, University of Wroc{\l}aw, Poland\\
\and Department of Mathematics and Statistics, University of Turku, Finland\\ \email{karhumak@utu.fi}\\
\and Department of Computer Science, Kiel University, Germany\\ \email{flm@informatik.uni-kiel.de}
}

\maketitle

\begin{abstract}
Two words $w_1$ and $w_2$ are said to be $k$-binomial equivalent if every non-empty word $x$ of length at most $k$ over the alphabet of $w_1$ and $w_2$ appears as a scattered factor of $w_1$ exactly as many times as it appears as a scattered factor of $w_2$. We give two different polynomial-time algorithms testing the $k$-binomial equivalence of two words. The first one is deterministic (but the degree of the
corresponding polynomial is too high) while the second one is randomised (but more direct and efficient). 
\end{abstract}

\section{Introduction}
An {\it alphabet} is a finite and nonempty set of symbols (also called letters). Any finite sequence of symbols from an alphabet $\Sigma$ is called a {\it word} over $\Sigma$. The set of all words over $\Sigma$ is denoted by $\Sigma^*$ and the {\em empty word} is denoted by $\eps$; also $\Sigma^+$ is the set of non-empty words over $\Sigma$, $\Sigma^k$ is the set of all words over $\Sigma$ of length exactly $k$, while $\Sigma^{\le k}$ is the set of all words over $\Sigma$ of length at most $k$. Given a word $w$ over an alphabet $\Sigma $, we denote by $|w|$ its length; for some $1 \leq i \leq |w|$ we denote the $i$-th letter of $w$ by $w[i]$. We also denote the factor that starts with the $i$-th letter and ends with the $j$-th letter in $w$ by $w[i..j]$. For $w,x\in \Sigma^+$ we denote by $|w|_x$ the number of distinct occurrences of $x$ as a factor of $w$. 

A \emph{scattered factor} of $w\in \Sigma^*$ is a word $w[i_1]\cdots w[i_k]$ for some $k\geq 1$ such that $i_j<i_{j+1}$ for all $1\leq j\leq k-1$. The \emph{binomial coefficient} of $u$ and $v$, denoted $u \choose v$, equals the number of occurrences of $v$ as a scattered factor of $u$. Clearly, for $a\in \Sigma$ we have ${u \choose a}=|u|_a$, while for $x\in \Sigma^+$ with $|x|\geq 2$ it is not necessary that $|u|_x={u\choose x}$. 
For example, if $u=bbaa$ and $v=ba$ we have ${u \choose v} ={bbaa\choose ba}=4$, as $u[1]u[3]=u[2]u[3]=u[1]u[4]=u[2]u[4]=ba$; clearly, $|u|_{ba}=1$.

For more details regarding these binomial coefficients see Chapter 6, by Sakarovitch and Simon, from the handbook \cite{Loth97}.

A well known equivalence relation between words is that of abelian equivalence. Two words $w_1,w_2\in \Sigma^*$ are said to be \emph{abelian equivalent} if for all $a\in \Sigma$ we have  $|w_1|_a=|w_2|_a$; equivalently, $w_1$ and $w_2$ are abelian equivalent if they have the same Parikh vector, thus being permutations of each other. This relation was extended in~\cite{KSZ13} (see also \cite{HuKaSaSa11}), where the \emph{$k$-abelian equivalence} relation was defined. Two words $w_1,w_2\in \Sigma^*$ are said to be \emph{$k$-abelian equivalent} if for all $x\in \Sigma^{\leq k}$ we have $|w_1|_x=|w_2|_x$. Obviously, the $1$-abelian equivalence relation is the same as the abelian equivalence.

As $|w_1|_a={w_1\choose a}$, another way to generalise the abelian equivalence relation is to define the \emph{$k$-binomial equivalence} (see the conference paper \cite{RigoWORDS}, as well as its journal version \cite{rigo1}). Two words $w_1,w_2\in \Sigma^*$ are said to be \emph{$k$-binomial equivalent} if for all $x\in \Sigma^{\leq k}$ we have ${w_1 \choose x}={w_2\choose x}$; if $w_1$ and $w_2$ are $k$-binomial equivalent, we write $w_1\equiv_k w_2$. Again, it is easy to see that the $1$-binomial equivalence is the same as the abelian equivalence. Combinatorial properties of the $k$-binomial equivalence relation are studied in \cite{RigoWORDS,rigo1,rigo2}. 

Recently, in \cite{EhMaMeNo14,EMMN15} a series of algorithmic results regarding the $k$-abelian equivalence were shown. As a basic result, it was shown that one can test whether two words are $k$-abelian equivalent in linear time. Therefore, it seems natural to us to study a similar problem in the context of $k$-binomial equivalence. That is, we are interested in the following problem.

\begin{problem}\label{k-binom-equiv}
Given $w_1,w_2\in \Sigma^*$, with $|w_1|=|w_2|=n$, and $k\leq n$, decide whether $w_1\equiv_k w_2$.
\end{problem}

Our main result shows that Problem \ref{k-binom-equiv} can be solved in polynomial time. The proof of this result uses a series of known results from the theory of finite automata, which does not exploit in any way the properties of $k$-binomial equivalence. Moreover, the degree of the polynomial characterising the time complexity of this algorithm is rather high, so we do not give it explicitly. Instead, we also show a simpler and much more direct Monte-Carlo algorithm solving the same problem. Our solutions assume a basic understanding of formal languages and automata theory; for more details, see \cite{roz:han} and \cite{Sak09}.

The main motivation of studying the algorithmic properties of the $k$-binomial equivalence relation is of fundamental nature: we have a new relation on words and we are, naturally, interested in how we can effectively test whether two words are equivalent with respect to this relation. Our results are also motivated by the work done in avoidability of $k$-binomial repetitions (e.g., squares and cubes in \cite{rigo2}). Constructing infinite words avoiding consecutive occurrences of factors from the same equivalence class with respect to the $k$-binomial equivalence often requires extensive computer simulations, whose basic operation is testing whether two consecutive factors are equivalent. As the words one constructs in such simulations are getting longer and longer, so do their factors whose equivalence one needs to test; consequently, efficient algorithms for testing the equivalence of words are required.

Before moving to the main sections of this paper, we just point out that the complexity results we show here hold in the unit-cost RAM with logarithmic size memory word. In this model (which is generally used in the analysis of algorithms) we assume that, if the size of the input is $n$ (e.g., we are given a word of length $n$), each memory cell can store $\Omega(\log n)$ bits, or, in other words, that {\em the machine word size} is $\Omega(\log n)$. The instructions are executed one after another, with no concurrent operations. The model contains common instructions: arithmetic (add, subtract, multiply, divide, remainder, shifts and bitwise operations, equality testing, etc.), data movement (indirect addressing, load the content of a memory cell, store a number in a memory cell, copy the content of a memory cell to another), and control (conditional and unconditional branch, subroutine call and return). Each such instruction takes a constant amount of time.  This model allows measuring the number of instructions executed in an algorithm, making abstraction of the time spent to execute each of the basic instructions.

\section{A polynomial deterministic algorithm}

The first step we take towards solving Problem \ref{k-binom-equiv} is to construct, for a word $w$, a non-deterministic finite automaton $A_w$ that accepts exactly the scattered factors of length at most $k$ of $w$ and, moreover, has exactly $w\choose x$ paths labelled with the scattered factor $x$ of $w$. 

Let us assume that $|w|=n$; then $A_w$ has $nk+2$ states; these states are 
$$Q_w= \{(0,0)\}\cup \{(i,j)\mid 1\leq i \leq n, 1\leq j\leq k\}\cup\{(n+1,k+1)\}.$$ 
The initial state of the automaton is $(0,0)$, while every state $(i,j)$ with $0<j\leq k$ and $i\geq j$ is final. The state $(n+1,k+1)$ is an error state; this state and the initial state are the only states that are not final.

We define the transition function $\delta_w$ for all $(i,j)\in Q_w$ and all $a\in\Sigma$ by
\begin{equation*}
	\delta_w\left(\left(i,j\right),a\right)= \begin{cases}
		\left\{(\ell,j+1)\in Q_w \mid \ell>i, w[\ell]=a\right\} & \text{if this set is non-empty},\\
		\{(n+1, k+1)\} & \text{otherwise.}
	\end{cases}
\end{equation*}
See Figure \ref{fig1} for an illustration. An immediate consequence of this definition is that  $\delta_w((n+1,k+1),a)=\{(n+1, k+1)\}$ holds for all $a\in\Sigma$.
\begin{figure}
\begin{center}
  \begin{tikzpicture}[]
    \node[rstate] (top1) {$(1, j-1)$};
    \node[rstate, right=0.3 of top1] (top2) {$(2, j-1)$};
    \node[rstate, right=of top2] (top3) {$(i',j-1)$};
    \node[rstate, right=of top3] (top4) {$(i-1,j-1)$};

    \node[rstate, below right=of top2] (boss) {$(i,j)$};

    \node[rstate, below=2.5 of top1] (bot1) {$(i+1, j+1)$};
    \node[rstate, right=0.3 of bot1] (bot2) {$(i+2, j+1)$};
    \node[rstate, right=of bot2] (bot3) {$(\ell,j+1)$};
    \node[rstate, right=of bot3] (bot4) {$(n,j+1)$};
    \path (top2) -- node {$\cdots$} (top3);
    \path (top3) -- node {$\cdots$} (top4);
    \path (bot2) -- node {$\cdots$} (bot3);
    \path (bot3) -- node {$\cdots$} (bot4);
	\path[->]
		(top1) edge node[left=1.5em] {$w[i]$} (boss)
		(top2) edge node[right] {$w[i]$} (boss)
		(top3) edge node[right] {$w[i]$} (boss)
		(top4) edge node[right=1em] {$w[i]$} (boss)
		(boss) edge node[left=1em] {$w[i+1]$} (bot1)
		(boss) edge node[right] {$w[i+2]$} (bot2)
		(boss) edge node[right] {$w[\ell]$} (bot3)
		(boss) edge node[right=1em] {$w[n]$} (bot4);
  \end{tikzpicture}
\end{center}
\caption{
The definition of the transition function: all the transitions leaving $(i,j)$ and leading to a non-error state, as well as all transitions going to $(i,j)$. We have $(\ell, j+1)\in \delta_w( (i,j), w[\ell])$, with $i<\ell\leq n$ and $ j<k$.
}
\label{fig1}
\end{figure}
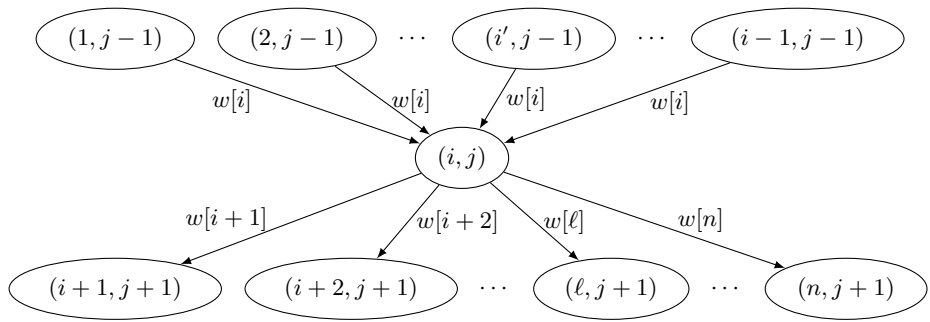

It is not hard to see that $A_w$ accepts exactly the words $w[i_1]\cdots w[i_{k'}]$ with $k'\leq k$ and $i_1< \ldots < i_{k'}$. Indeed, to accept such a word the automaton starts in the state $(0,0)$, and then goes through the states 
$$(i_1,1), (i_2,2),\ldots, (i_j,j), \ldots, (i_{k'},k');$$ 
as $1\leq i_1< \ldots < i_{k'}$ it is clear that $i_{k'}\geq k'$, so the state reached by the automaton is an accepting one. For the reverse implication, assume that the word $x$ is accepted by $A_w$ on the path formed by the states 
$$(0,0), (i_1,1), (i_2,2),\ldots, (i_j,j), \ldots, (i_{k'},k').$$ 

By the definition of $A_w$ we immediately get that $i_{j}<i_{j+1}$ for all $1\leq j\leq k'-1$; also, $i_1>0$. Thus, $i_j>j$ $1\leq j\leq k'$. Moreover, each transition ending in $(i_j,j)$ is labelled with $w[i_j]$, so $x=w[i_1]\cdots w[i_{k'}]$ is a scattered factor of $w$. 

Finally, the argument above shows that there is a bijective correspondence between the sequences of indices defining the scattered factors of length at most $k$ of $w$ and the paths of $A_w$. In conclusion, $A_w$ accepts the set of scattered factors of length at most $k$ of $w$ and, moreover, has exactly as many paths labelled with the scattered factor $x$ of $w$ as the total number of occurrences of $x$ as a scattered factor of $w$ (i.e., $w\choose x$). 

Before coming back to the solution Problem \ref{k-binom-equiv}, we recall that two non-deterministic finite automata are said to be \emph{path-equivalent} if for each word $x$ the number of distinct accepting paths labelled with $x$ of $A_1$ equals the number of distinct accepting paths labelled with $x$ of $A_2$, or both are infinite. 

In our problem, we were given $w_1$ and $w_2 $ and wanted to test whether $w_1\equiv_k w_2$. By the above, it is enough to construct $A_{w_1}$ and $A_{w_2}$ and test whether $A_{w_1}$ and $A_{w_2}$ are path-equivalent. The latter property is decidable (see \cite{siamNFA,Sak09} and the references within for a discussion on this problem and its complexity).

In the following, we show that this algorithm runs in polynomial time in our model of computation. The construction of the two automata $A_{w_1}$ and $A_{w_2}$ takes $\bigo(nk)$ time. Moreover, as none of $A_{w_1}$ and $A_{w_2}$ has transitions labelled with $\eps$, it follows from \cite{siamNFA} that there is an algorithm deciding the path-equivalence of $A_{w_1}$ and $A_{w_2}$ that runs in polynomial time with respect to the size of these automata (so, essentially, with respect to $nk$). Note that the algorithm presented in \cite{siamNFA} is only shown to run in polynomial time in a computational model where it is assumed that the arithmetic operations between any (no matter how big) rational numbers can be done in constant time. To show that the algorithm still runs in polynomial time in our model of computation, we need to go further into details. 

Basically, the algorithm of \cite{siamNFA} applied to the two automata we constructed either decides that $A_{w_1}$ and $A_{w_2}$ identifies the lexicographically first word $x$ such that $A_{w_1}$ has a different number of accepting paths labelled with $x$ than $A_{w_2}$. To do this, the algorithm explores the set of words from $\Sigma^*$ in lexicographical order; it maintains a list of words $V$ and for each $v\in V$ the array $P(v)$ storing the number of accepting paths in $A_{w_1}$ and $A_{w_2}$ (that is, an array storing for each final state of the two automata, how many paths labelled with $v$ connect the initial state of the respective automata to that final state). If the list $V$ contains at some moment the words $x_1,\ldots,x_\ell$ and the new considered word is $x$, the algorithm checks if the array $P(x)$ is linearly independent from $P(x_1), \ldots, P(x_\ell)$. If yes, $x$ is added to $V$ and the algorithm further tries all words $xa$ with $a\in \Sigma$. If no, the algor
 ithm stops trying any other word that has $x$ as a prefix. In \cite{siamNFA} it is shown that only a polynomial number of words should be tried in this process, since $V$ may contain up to $2nk$ words (as many words as the number of final states of the two automata). In our particular case, it is clear that all words that are longer than $k$ are not accepted by any of our automata (i.e., the array $P(x)$ of some $x$ longer than $k$ contains only $0$s); so, essentially, our algorithm will only try words of length at most $k+1$. Each such word $x$ that is accepted by one of our automata is accepted on at most $n^\ell$ paths, where $\ell\leq k$ is the length of $x$, in total. So, its array $P(x)$ can be stored in at most $\bigo(nk^2)$ memory words (that is, $k$ memory words for each final state, or, in other words, $k$ memory words for each component of the array). At each step of the algorithm, we test whether the newly considered $x$ produces an array $P(x)$ linearly independent from
  the arrays $P(y)$ with $y\in V$; since all these arrays contain only words that can be stored on $k$ memory words, this test can be done in polynomial time. Indeed, if we use either  a Gaussian elimination method or a modular method, such a test can be implemented in polynomial time (see, e.g., \cite{mulders} and the references within, as well as \cite{Gathen}). Finally, the algorithm just checks whether there exists a word $x$ in $V$ which is accepted on a different number of paths in $w_1$ than in $w_2$. Again, this clearly takes polynomial time. 

This concludes our analysis. We do not go into details and compute the exact complexity of the algorithm described above: we just state that it runs in polynomial time. While the preprocessing phase in which $A_{w_1}$ and $A_{w_2}$ are constructed is rather simple, computing the complexity of the algorithm from \cite{siamNFA} requires really going into the implementation details of each step (for instance, testing the linear independency of the arrays), and this is not our purpose. We just note that the exponent of $n$ in the complexity of this algorithm is at least~$3$ (in other words, the algorithm is at least cubic in $n$). The main result of our paper is, thus, the following theorem. 
\begin{theorem}Problem \ref{k-binom-equiv} can be solved in polynomial time.
\end{theorem}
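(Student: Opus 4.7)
The plan is to combine the constructions from the preceding section into a polynomial-time procedure and to verify that every step remains polynomial in the unit-cost RAM model with $\Theta(\log n)$-bit memory words. First, given $w_1,w_2\in\Sigma^*$ of length $n$ together with $k\leq n$, I would build the two automata $A_{w_1}$ and $A_{w_2}$ directly from the definition, using $\bigo(nk)$ states and $\bigo(n^2 k)$ transitions; by the bijection between accepting paths and strictly increasing index sequences established above, the problem $w_1\equiv_k w_2$ reduces to testing whether $A_{w_1}$ and $A_{w_2}$ are path-equivalent.

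Next, I would invoke the algorithm of \cite{siamNFA}, which decides path-equivalence of two $\eps$-free NFAs in time polynomial in the total number of states $N=\bigo(nk)$. That algorithm maintains a basis $V$ of signature vectors $P(v)$ --- whose entries count accepting paths labelled $v$ ending at each final state of both automata --- extending $V$ by one-letter right extensions in lexicographic order and pruning any candidate $v$ whose signature is linearly dependent on those already in $V$. Since $|V|\leq 2N$ and only words of length at most $k+1$ can ever yield a nonzero signature, the number of candidates whose signature must be computed and tested is polynomially bounded.

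The main obstacle is the bit-complexity of the arithmetic, since \cite{siamNFA} charges unit cost per rational operation whereas our model charges per machine word. Here I would rely on the following estimate: every entry of $P(v)$ is bounded by $\binom{n}{|v|}\leq n^k$, so it fits into $\bigo(k)$ machine words, and the full signature occupies $\bigo(nk^2)$ words in total. A new signature $P(xa)$ is obtained from $P(x)$ by one sparse matrix-vector multiplication with the transition matrix of the combined automaton, which is clearly polynomial. The remaining nontrivial step is detecting linear dependence; this can be implemented in polynomial time either by fraction-free Gaussian elimination with controlled coefficient growth, or by a modular method using $\bigo(\log n)$-bit primes, as surveyed in \cite{mulders,Gathen}. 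Summing these polynomial bounds over the polynomially many candidates yields the claimed polynomial running time, and hence the theorem.
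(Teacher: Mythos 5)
Your proposal follows essentially the same route as the paper: build $A_{w_1}$ and $A_{w_2}$, reduce $k$-binomial equivalence to path-equivalence of $\eps$-free NFAs, run the algorithm of \cite{siamNFA}, and then control the bit-complexity by noting that each path count is at most $n^{k}$ (hence $\bigo(k)$ machine words), that only words of length at most $k+1$ matter, and that linear independence can be tested in polynomial time by fraction-free Gaussian elimination or a modular method. The argument is correct and matches the paper's proof in all essential respects.
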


Although based on a rather simple idea (the construction of the two automata), the algorithm presented in this section has a drawback: the main part of the computation is hidden in the algorithm checking the path equivalence of these two automata. Accordingly, in the following section we present a direct and more efficient randomised algorithm testing the $k$-binomial equivalence of two words.

\section{A Monte-Carlo algorithm}

We begin with a series of prerequisites. The first one is a folklore result; although it is really well known, 
we give a short sketch of the proof for completeness.
\begin{lemma}\label{prime}
We can generate a number $p$ using $\bigo(t^{3})$ operations on $t$-bit numbers, so that $p$ is a random $t$-bit prime with probability at least $1-\frac{1}{2^{t}}$.
\end{lemma}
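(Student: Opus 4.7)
My plan is the textbook generate-and-test scheme using the Miller--Rabin primality test. The algorithm draws uniformly random $t$-bit integers one at a time; each candidate $x$ is subjected to up to $r$ independent Miller--Rabin rounds, aborting on the first round that witnesses $x$ as composite. The first candidate that survives all $r$ rounds is returned; if $N$ candidates are exhausted without success, the algorithm reports failure. I will pick $N = \Theta(t^2)$ and $r = \Theta(t)$.

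For correctness, the Prime Number Theorem gives a prime density of $\Omega(1/t)$ among $t$-bit integers, so the probability that none of the $N$ candidates is prime is at most $(1-c/t)^N \le e^{-cN/t}$, which is at most $\tfrac{1}{2}\cdot 2^{-t}$ for $N = \Theta(t^2)$. Conditional on compositeness, each Miller--Rabin round exposes $x$ independently with probability at least $3/4$, so a fixed composite survives all $r$ rounds with probability at most $4^{-r}$; union-bounding across the $N$ candidates, the probability of ever outputting a composite is at most $N\cdot 4^{-r} \le \tfrac{1}{2}\cdot 2^{-t}$ for $r = \Theta(t)$. Summing the two failure modes gives overall error at most $2^{-t}$.

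For the runtime, a single Miller--Rabin round on a $t$-bit input is dominated by one modular exponentiation, which takes $\bigo(t)$ modular multiplications, i.e., $\bigo(t)$ operations on $t$-bit numbers. Because the test aborts on the first refuting round and each round rejects a composite with probability at least $3/4$, the number of rounds actually executed on a given composite is stochastically dominated by a geometric random variable of mean $\bigo(1)$. Thus the expected total number of rounds spent on composites is $\bigo(N) = \bigo(t^2)$, plus $r = \bigo(t)$ rounds on the final accepted candidate, giving $\bigo(t^3)$ expected operations overall. A concentration bound on this sum of independent bounded geometrics (or, equivalently, capping the algorithm after $Ct^3$ operations for a suitable $C$) turns this into a worst-case $\bigo(t^3)$ bound, with the negligible extra failure absorbed into the $1/2^t$ budget.

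The main obstacle is the interplay between the time guarantee and the error guarantee: naively multiplying $N = \Theta(t^2)$ candidates by $r = \Theta(t)$ rounds each would give $\bigo(t^4)$, so one must invoke the early-abort structure of Miller--Rabin together with its strong per-round rejection bound to save a factor of $t$. A minor technicality is confirming that drawing a uniformly random $t$-bit integer and performing the associated bookkeeping can be done with $\bigo(1)$ operations on $t$-bit numbers in the unit-cost RAM model described in the introduction.
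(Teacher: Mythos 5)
Your proof is correct, but it diverges from the paper's at the key step of bounding the probability of outputting a composite. The paper runs only a \emph{single} Miller--Rabin round on each of the $\Theta(t^{2})$ candidates, so the $\bigo(t^{3})$ operation count is immediate, and then invokes the average-case error estimate for the strong probable-prime test due to Damg{\aa}rd, Landrock, and Pomerance (Theorem~2 of the reference cited in its proof) to bound the probability that the first surviving candidate is composite by $t^{2}4^{2-\sqrt{t}}$. You instead run $r=\Theta(t)$ rounds per candidate and use only the elementary worst-case guarantee that each round rejects a fixed composite with probability at least $3/4$, giving the error bound $N\cdot 4^{-r}$; the price is that you must then recover the $\bigo(t^{3})$ operation count via the early-abort/geometric argument plus a concentration or capping step, which the paper does not need. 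Each route buys something. The paper's is operationally simpler and trivially worst-case $\bigo(t^{3})$, but it leans on a nontrivial number-theoretic theorem --- and, as written, its bound $t^{2}4^{2-\sqrt{t}}$ does not actually fit under the advertised $2^{-t}$ for large $t$, whereas your $N\cdot 4^{-r}$ with $r=\Theta(t)$ genuinely does. Your argument is self-contained modulo the standard $3/4$ bound and the prime number theorem; the one piece you should spell out is the tail bound on the total number of rounds spent on composites (a Chernoff bound for the sum of $N=\Theta(t^{2})$ independent truncated geometrics gives failure probability $2^{-\Omega(t^{2})}$, comfortably within the $2^{-t}$ budget), after which capping the run at $Ct^{3}$ operations is legitimate.
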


\begin{proof}
We recall that, given a $t$-bit number $p$, one iteration of the Rabin-Miller primality test~\cite{RM} performs $\bigo(t)$ operations on $t$-bit numbers, always returns yes if $p$ is prime, and otherwise returns no with probability at least $\frac{3}{4}$. We choose a random odd $t$-bit number $p$ and execute one iteration of the Rabin-Miller test. If the test succeeds, we return $p$, and otherwise repeat. By Theorem 2 of~\cite{Pomerance}, the procedure returns a composite $p$ with probability less than $t^{2}4^{2-\sqrt{t}}$. However, we need to modify it so that the total number of operations is always $\bigo(t^{3})$.
To this end, we simply terminate after having tried $\Theta(t^{2})$ random $t$-bit numbers.
By the prime number theorem, the probability of a random odd $t$-bit number being prime is $\Theta(\frac{1}{t})$. Hence if we generate $\Theta(t^{2})$ such random numbers, the probability of all of them being composite is at most $\frac{1}{2^{t+1}}$ for $t$ large enough. Therefore, the total error probability is less than $\frac{1}{2^{t}}$ for $t$ large enough. (For smaller $t$, we can use a naive method.) The total number of operations is now always $\bigo(t^{3})$.
\qed
\end{proof}

The second auxiliary result is a particular case of the Schwartz-Zippel lemma. For a prime number $p$, let $\mathbb{F}_p$ denote the finite field on $p$ elements consisting of the integers modulo $p$. It is well known that a non-zero polynomial $Q\in \mathbb{F}_p[x]$ of degree $d$ has at most $d$ distinct roots in $\mathbb{F}_p$. Thus, the following trivially holds:
\begin{lemma}\label{SZ}
Let $Q$ be a non-zero polynomial of degree $d$ in $\mathbb{F}_p$. Then, the probability that a randomly chosen $x \in \mathbb{F}_p$ is a root of $Q$ is at most $\frac{d}{p}$:
$$\mathrm{Pr}_{x\in \mathbb{F}_p}[Q(x)=0] \leq \frac{d}{p}.$$ 
\end{lemma}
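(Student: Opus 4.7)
The plan is to derive the bound as an immediate corollary of the fact that the paragraph preceding the lemma has already recalled, namely that a non-zero polynomial $Q\in\mathbb{F}_p[x]$ of degree $d$ has at most $d$ distinct roots in $\mathbb{F}_p$. Concretely, I would let $R=\{x\in\mathbb{F}_p : Q(x)=0\}$ denote the (finite) set of roots of $Q$, invoke the cited fact to conclude $|R|\leq d$, and then observe that a uniformly random element $x\in\mathbb{F}_p$ satisfies
\[
\mathrm{Pr}_{x\in\mathbb{F}_p}[Q(x)=0] \;=\; \frac{|R|}{p} \;\leq\; \frac{d}{p},
\]
which is exactly the claim.

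The only step that merits even a brief justification is the bound $|R|\leq d$ itself. If pressed, I would argue it as follows: since $\mathbb{F}_p$ is a field, $\mathbb{F}_p[x]$ is a Euclidean (hence unique factorisation) domain, so every root $r\in R$ contributes a linear factor $(x-r)$ to $Q$, and distinct roots give coprime linear factors. Therefore $\prod_{r\in R}(x-r)$ divides $Q$, which forces $|R|=\deg\bigl(\prod_{r\in R}(x-r)\bigr)\leq \deg Q = d$, using here that $Q$ is non-zero so that the divisor has degree at most that of the dividend.

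There is essentially no obstacle to overcome here; indeed, the authors themselves mark the statement as one that ``trivially holds'' once the degree/root bound is in hand. For the paper's purposes, the two lines above are sufficient, and I would not elaborate further — the genuine content lies in how Lemma~\ref{SZ} is later combined with the prime-generation result of Lemma~\ref{prime} inside the Monte-Carlo equivalence test, not in the proof of the lemma itself.
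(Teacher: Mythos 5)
Your proposal is correct and matches the paper's argument exactly: the paper also derives the bound immediately from the fact that a non-zero polynomial of degree $d$ over $\mathbb{F}_p$ has at most $d$ roots, so that a uniformly random $x\in\mathbb{F}_p$ hits a root with probability at most $\frac{d}{p}$. Your extra justification of the root bound via linear factors is sound but not needed, as the paper simply cites this fact as well known.
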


%
We now continue with the main part of this section.

For $v \in \{0,1\}^+$ let $\mathrm{bin}(v)$ be the number which binary representation is $v$. We define the crucial polynomial:
$$Q_{k,w}(x)=\sum_{|v|\le k} {w\choose v}\, x^{\mathrm{bin}(1v)}.$$
\begin{example}
\begin{eqnarray*}
Q_{2,0010}(x)&=&{0010\choose 0}\, x^{\mathrm{bin}(10)}\,+\, {0010\choose 1}\, x^{\mathrm{bin}(11)} \,+\, {0010\choose 00}\, x^{\mathrm{bin}(100)}\, + \\ 
& & {0010\choose 01}\, x^{\mathrm{bin}(101)}\,+\, {0010\choose 10}\, x^{\mathrm{bin}(110)}\,+\, {0010\choose 11}\, x^{\mathrm{bin}(111)}\\
&=& 3x^2+x^3+3x^4+x^5+x^6
\end{eqnarray*}
\end{example}
Clearly the powers of the variable $x$ encode uniquely the scattered factors of the word $w$, consequently:
\begin{observation}
$w_1\equiv_k w_2$ if and only if $Q_{k,w_1}\equiv Q_{k,w_2}$ in $\mathbb{Z}$.
\end{observation}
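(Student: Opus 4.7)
The plan is to prove the observation by showing that $Q_{k,w}$ is really just a compact bookkeeping device whose coefficients read off exactly the binomial coefficients $\binom{w}{v}$ for $v\in\{0,1\}^+$ with $|v|\le k$. The whole argument reduces to verifying that the exponents $\mathrm{bin}(1v)$ are pairwise distinct as $v$ ranges over $\{0,1\}^+$; once that is established, equality of the two polynomials as elements of $\mathbb{Z}[x]$ is the same thing as equality of all those coefficients one by one.

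First I would observe that the map $v\mapsto\mathrm{bin}(1v)$ is injective on $\{0,1\}^+$. Indeed, the prepended leading $1$ fixes a canonical binary representation: for $v\in\{0,1\}^n$ the integer $\mathrm{bin}(1v)$ lies in the interval $[2^n,2^{n+1})$, so the length of $v$ is recovered as $\lfloor\log_2 \mathrm{bin}(1v)\rfloor$, and then $v$ itself is recovered by stripping the leading $1$ from the base-$2$ expansion. Hence no two distinct scattered factors $v$ contribute to the same power of $x$ in the defining sum of $Q_{k,w}$, which means that the coefficient of $x^{\mathrm{bin}(1v)}$ in $Q_{k,w}$ equals $\binom{w}{v}$ whenever $1\le|v|\le k$, and the coefficient of every other power of $x$ equals $0$.

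From here the two directions of the equivalence fall out immediately. If $w_1\equiv_k w_2$, then $\binom{w_1}{v}=\binom{w_2}{v}$ for every $v\in\{0,1\}^{\le k}$, so the coefficient of each monomial $x^{\mathrm{bin}(1v)}$ in $Q_{k,w_1}$ matches the one in $Q_{k,w_2}$, and all other coefficients are $0$ in both; hence $Q_{k,w_1}=Q_{k,w_2}$ as polynomials over $\mathbb{Z}$. Conversely, if $Q_{k,w_1}\equiv Q_{k,w_2}$ in $\mathbb{Z}[x]$, then reading off the coefficient of $x^{\mathrm{bin}(1v)}$ on both sides for each $v$ with $1\le|v|\le k$ gives $\binom{w_1}{v}=\binom{w_2}{v}$, and the case $v=\varepsilon$ holds trivially since $\binom{w_1}{\varepsilon}=1=\binom{w_2}{\varepsilon}$.

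There is no real obstacle here; the only thing to be a little careful about is precisely the injectivity of the encoding $v\mapsto\mathrm{bin}(1v)$. Note that one cannot drop the prepended $1$: without it the strings $0$ and $00$ (different scattered factors in general) would both map to the exponent $0$ and the coefficients would merge, destroying the bijective correspondence between monomials and scattered factors.
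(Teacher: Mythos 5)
Your proof is correct and follows essentially the same route as the paper, which justifies the observation with the single remark that the powers of $x$ encode the scattered factors uniquely; you have simply made explicit the injectivity of $v\mapsto\mathrm{bin}(1v)$ (via the prepended leading $1$) and the resulting coefficient-by-coefficient comparison. No further commentary is needed.
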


By definition, for any word $w$ with $|w|\geq k$, the degree of $Q_{k,w}(x)$ is $2^{k+1}-1$, so we cannot afford (time-wise) to construct it explicitly for any of the words $w_1$ and $w_2$, as enumerating the coefficients of such a polynomial would take exponential time. So, what we should see now is how to compute efficiently $Q_{w_1,w_2}(x) := Q_{k,w_{1}}(x)-Q_{k,w_{2}}(x)$ in $\mathbb{F}_{p}$; this is solved in the next lemma, where we show how $Q_{k,w}(x)$ is computed in time $\bigo(nk^{2})$ for a word $w$ of length $n$.
In the end we will choose $p$ such that $\log p = \Theta(k+\log n)$. Consequently, we assume that
operations on numbers in $\mathbb{F}_{p}$ take $\bigo(k)$ time, because in our model two numbers consisting of
$b\leq n$ bits can be added, subtracted, multiplied, and divided in $\bigo(\lceil\frac{b}{\log n}\rceil\log\lceil\frac{b}{\log n}\rceil)$ time.
The bottleneck is that we cannot construct $Q_{w_1,w_2}$ explicitly, so we need to go around this in order to compute $Q_{w_1,w_2}(x)$.
\begin{lemma}
For a word $w$ of length $n$, the value $Q_{k,w}(x)$ in $\mathbb{F}_p$ can be computed in $\mathcal{O}(k^{2}n)$ time.
\end{lemma}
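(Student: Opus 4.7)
The plan is to evaluate $Q_{k,w}$ at the chosen $x\in\mathbb{F}_p$ by a dynamic programming that reads $w$ from \emph{right to left}, grouping the contributions by the length of the scattered factor. For $j\in\{0,1,\ldots,k\}$ and a word $u\in\Sigma^*$, define
\[
U_j(u) \;=\; \sum_{|v|=j}\binom{u}{v}\,x^{\mathrm{bin}(1v)} \;\in\; \mathbb{F}_p,
\]
so that $Q_{k,w}(x)=\sum_{j=0}^{k}U_j(w)$ (up to a convention for the $v=\varepsilon$ term, which contributes the constant $U_0(u)=x$ for every $u$; simply subtract it at the end if one excludes $v=\varepsilon$). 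Each $U_j(u)$ is a single field element, so storing and manipulating it costs $\mathcal{O}(k)$ bits.

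The algebraic core is a short identity. Write $u=a\cdot u'$ with $a\in\{0,1\}$. A length-$j$ scattered factor of $u$ either lies entirely inside $u'$, or it begins at position $1$ of $u$ and its tail is a length-$(j-1)$ scattered factor $v'$ of $u'$. For such a tail, $\mathrm{bin}(1\,a\,v')=2^{j}+a\cdot 2^{j-1}+\mathrm{bin}(v')=(a+1)\cdot 2^{j-1}+\mathrm{bin}(1v')$, using $\mathrm{bin}(1v')=2^{j-1}+\mathrm{bin}(v')$. Summing over $v'$ and using the standard split of $\binom{au'}{av'}=\binom{u'}{v'}+\binom{u'}{av'}$ that makes the first case the $U_j(u')$ term, one obtains the \emph{linear} recurrence
\[
U_j(a\,u') \;=\; U_j(u') \;+\; x^{(a+1)\cdot 2^{j-1}}\,U_{j-1}(u') \qquad (j\geq 1),
\]
with $U_j(\varepsilon)=0$ for $j\geq 1$ and $U_0(u)=x$. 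The multiplier $x^{(a+1)\cdot 2^{j-1}}$ is always either $x^{2^{j-1}}$ or $x^{2^{j}}$, so it suffices to precompute $x^{2^0},x^{2^1},\ldots,x^{2^k}$ by $k$ repeated squarings, at cost $\mathcal{O}(k)$ field operations.

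The algorithm then scans $w$ from position $n$ down to $1$. For each letter, it updates $U_j$ for $j=k,k-1,\ldots,1$ by one multiplication and one addition, traversing $j$ in decreasing order so that each update uses the value of $U_{j-1}$ from \emph{before} the current letter was processed (alternatively, work on a scratch copy). This is $\mathcal{O}(k)$ field operations per letter, hence $\mathcal{O}(nk)$ field operations in total; summing the $U_j(w)$ at the end adds $\mathcal{O}(k)$ more. Since each operation in $\mathbb{F}_p$ costs $\mathcal{O}(k)$ time in the paper's model, the total running time is $\mathcal{O}(k^{2}n)$, as claimed.

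The main obstacle is getting the recurrence to be additive in the aggregate $U_{j-1}(u')$ rather than in its individual summands. A naive left-to-right DP that \emph{appends} $a$ to a scattered factor $v'$ fails because $\mathrm{bin}(1v'a)=2\cdot\mathrm{bin}(1v')+a$, so each summand in $U_{j-1}(u')$ would be \emph{squared} before being added to $U_j(u'a)$; a sum of squares is not the square of a sum, so one cannot transport the information with a single field element per length. Processing $w$ from the right (prepending) sidesteps this: prepending $a$ only inserts one bit of weight $2^{j-1}$ and shifts the leading $1$ of $\mathrm{bin}(1v')$ up by one bit, giving a multiplicative factor that depends only on $a$ and $j$, not on $v'$; this is exactly what makes the $\mathcal{O}(nk)$-field-operation DP possible.
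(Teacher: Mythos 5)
Your proposal is correct and follows essentially the same route as the paper: a right-to-left dynamic program over suffixes of $w$, indexed by the length of the scattered factor, with the multiplier $x^{2^{j-1}}$ or $x^{2^{j}}$ precomputed by repeated squaring; the only (cosmetic) difference is that you carry the leading-$1$ marker inside the exponent throughout (your $U_j(u)$ equals $x^{2^{j}}$ times the paper's auxiliary value $Q'_{j,u}(x)$), whereas the paper multiplies by $x^{2^{t}}$ only at the very end. The recurrences coincide after this change of normalisation, and your complexity accounting matches the paper's.
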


\begin{proof}
We use an auxiliary polynomial. Let $$Q'_{t,w}(x)=\sum_{|v|=t}\; {w\choose v}\, x^{\mathrm{bin}(v)}.$$
In other words  $Q'_{t,w}(x)=\sum_{\ell=0}^{2^{t}-1} \alpha_{\ell,w}x^{\ell}$, where $\alpha_{\ell,w}$ is the number of
scattered occurrences of the word of length $t$ corresponding to the binary expansion of $\ell$ in the whole word $w$.
For example if $t=3$ and $w=11010$,
then $\alpha_{6,w}=4$, because $6=110_{2}$ and there are four scattered occurrences of $110$ in $11010$, i.e., ${{11010}\choose {110}}=6$.
It is enough to compute polynomials $Q'$ since
$$\sum_{t=1}^{k} x^{2^{t}} Q'_{t,w}(x) = \sum_{t=1}^k \left( \sum_{|v|=t}\; {w\choose v}\, x^{\mathrm{bin}(1v)}\right )=Q_{k,w}(x).$$
The additional factor $x^{2^{t}}$ is needed
since two different words $v,v'$ can start with different number of zeros,
so it can be the case that $\mathrm{bin}(v)=\mathrm{bin}(v')$ despite the fact that, actually, $v\neq v'$.

We use dynamic programming to compute all $Q'_{k',w[i..n]}(x)$, where $0\leq k'\leq k$, $1\leq i\leq n+1$, and $w[i..n]$ is the suffix of $w$ starting at
the $i$-th character. We denote by $T[k',i]$ the value $Q'_{k',w[i..n]}(x)$. Every such $T[k',i]$ will be computed just once and in time $\bigo(k)$ if we precompute all the numbers $x^{2^{k'}}$ for $1\leq k'\leq k$ in $\bigo(k^{2})$ time.

Then, we just have to compute $Q_{k,w}(x)=\sum_{k'=1}^{k}x^{2^{k'}}T[k',1]$, which can be done in $\bigo(k)$ time. Hence the claimed overall complexity will follow. 

First, we claim that the following recurrence holds:

$$
T[k',i]=\begin{cases}
1 & \text{if } k'=0\\
0 & \text{if } k'>0 \text{ and } i=n+1\\
T[k',i+1]+T[k'-1,i+1] & \text{if } k'>0 \text{ and } i\leq n \text{ and } w[i]=0\\
T[k',i+1]+T[k'-1,i+1]x^{2^{k'-1}} & \text{if } k'>0 \text{ and } i\leq n \text{ and } w[i]=1\\
\end{cases}
$$

This is because of the following reasoning. We write every $T[k',i]$ as a polynomial in $x$. Then, if the recurrence holds, it can be seen easily that
$T[k',i]$ is a sum, over all choices of $i\leq j_{1} < j_{2} < \dots < j_{k'}\leq n$, of monomials of the following form:
$$x^{w[j_{1}]2^{k'-1}} \times x^{w[j_{2}]2^{k'-2}} \times \dots \times x^{w[j_{k'}]2^{0}}.$$ 
But this is the same as summing monomials of the form:
$$x^{w[{j_{1}}]2^{k'-1}+w[{j_{2}}]2^{k'-2}+\dots+w[{j_{k'}}]2^{0}}.$$
Further, $w[{j_{1}}]2^{k'-1}+w[{j_{2}}]2^{k'-2}+\dots+w[{j_{k'}}]2^{0}$ is really the number from $[0,2^{k'})$ whose binary encoding is the word $w[j_{1}][j_{2}]\dots w[j_{k'}]$. 
Therefore, the coefficient of $x^{\ell}$ in $T[k',i]$ is exactly the number of ways we can choose a scattered factor $w[j_{1}][j_{2}]\dots w[j_{k'}]$ of $w[i..n]$ such that $w[j_{1}][j_{2}]\dots w[j_{k'}]$ is the binary encoding of $\ell$; in other words, this coefficient equals the number of scattered occurrences of the binary word corresponding to $\ell$ in $w[i.. n]$.

Consequently, we get that $Q'_{k',w}(x)=T[k',1]$, as claimed. The conclusion of the lemma follows easily.
\qed \end{proof}

We conclude this section by putting together all the preliminary results we have shown, to obtain the final Monte-Carlo algorithm solving Problem \ref{k-binom-equiv}.


\vskip 0.2cm  \begin{small}
    \begin{center}
    \fbox{\vspace*{0.2cm}
    \begin{minipage}{7.1cm}
    \vspace*{0.3cm} \noindent
     \hspace*{0.2cm} {\bf Randomised Algorithm}
     \vskip .2cm
\hspace*{0.2cm} let $p$ be a random $\lceil k+1+2\log n\rceil$-bit prime
\vskip .2cm
\hspace*{0.2cm}
choose random $x\in \mathbb{F}_{p}$
\vskip .2cm
\hspace*{0.2cm} compute $Q_{k,w_1}(x)$ and $Q_{k,w_2}(x)$ in $\mathbb{F}_{p}$
\vskip .2cm
\hspace*{0.2cm} $Q_{w_1,w_2}(x)\,:=\,Q_{k,w_1}(x)-Q_{k,w_2}(x)$
\vskip .2cm
 \hspace*{0.2cm} {\bf return }
 YES if $Q_{w_1,w_2}(x)\,=\,0$, NO otherwise
 \vspace*{0.2cm}
  \end{minipage}
  }
  \end{center}
  \end{small}
  
The overall time complexity is clearly polynomial both in $k$ and in $n$; as $k\leq n$, we conclude that this algorithm runs in polynomial time. More precisely, generating a prime number requires $\bigo(t^{3})$ operations, where $t=\lceil k+1+2\log n\rceil$. Then, we use $\bigo(nk)$ operations to fill the table. Therefore, the total time complexity is $\bigo(nk^{2}+(k+1+\log n)^{3}k)$. By considering the case $k\leq \log n$ and $k>\log n$ separately, we conclude that the total time complexity is $\bigo(nk^{2}+k^{4})$.

Now, if $w_1\equiv_{k} w_2$ then $Q_{w_1,w_2}(x)=0$ for all $x\in \mathbb{F}_p$, so the algorithm will always return YES.
Otherwise, there are three ways it could err. First, we could have generated a composite $p$.
This happens with probability at most $\frac{1}{2^{t}}$.
Second, it might happen that $Q_{w_{1},w_{2}}$ is non-zero in the integers, but vanishes in the integers modulo $p$.
By definition, the coefficients of $Q_{w_{1},w_{2}}$ are bounded by $n^{k}$, so if the polynomial is non-zero in the integers, yet vanishes modulo $p$, $p$ must be a prime divisor of a fixed number bounded by $n^{k}$. It is well known that the number of distinct prime divisors of $x$, denoted $\omega(x)$, satisfies $\omega(x)=\bigo(\frac{\log x}{\log\log x})$. Because there are $\pi(2^{t+1})-\pi(2^{t})=\Theta(\frac{2^{t}}{t})$ primes in the interval $[2^{t},2^{t+1})$, for $n$ large enough this happens with probability at most $\frac{\omega(n^{k})}{\pi(2^{t+1})-\pi(2^{t})}\leq k\log n \frac{k+1+2\log n}{2^{k+1}n^{2}}=o(\frac{1}{n})$.
Third, our choice of $x$ might have been unfortunate. By the Schwartz-Zippel lemma,
this happens with probability at most $\frac{2^{k+1}-1}{2^{t}}$.
By the union bound, for large enough $n$, the total error probability is, consequently, less than $\frac{1}{n}$ as required.

\begin{theorem}
Problem \ref{k-binom-equiv}, for input words of length $n$, can be solved by a Monte-Carlo algorithm with running time $\bigo(nk^2+k^{4})$. The algorithm always returns a positive answer when the input words $w_1$ and $w_2$ are $k$-binomial equivalent, and returns a negative answer when $w_1$ and $w_2$ are not $k$-binomial equivalent with probability at least $1-\frac{1}{n}$.
\end{theorem}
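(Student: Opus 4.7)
My plan is to prove the running-time bound and the error bound separately, invoking the two preceding lemmas and the Schwartz--Zippel lemma as black boxes.

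For the running time, I would combine the cost of prime generation from Lemma~\ref{prime} with the cost of polynomial evaluation from the previous lemma. Generating a $t$-bit prime with $t=\lceil k+1+2\log n\rceil$ uses $\bigo(t^{3})$ operations on $t$-bit numbers; since $t=\Theta(k+\log n)$, each such operation costs $\bigo(k)$ time in our word-RAM model, which gives $\bigo((k+\log n)^{3}k)$ overall. Evaluating each of $Q_{k,w_1}(x)$ and $Q_{k,w_2}(x)$ in $\mathbb{F}_p$ costs $\bigo(nk^{2})$ by the previous lemma; the remaining subtraction and comparison take $\bigo(k)$ time. The total is therefore $\bigo(nk^{2}+(k+\log n)^{3}k)$, which I would simplify to $\bigo(nk^{2}+k^{4})$ by splitting on the size of $k$: if $k\le\log n$, the second summand is $\bigo((\log n)^{4})=\bigo(n)$ and is absorbed by $nk^{2}$; if $k>\log n$, it is $\bigo(k^{4})$.

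For correctness, if $w_1\equiv_k w_2$ then by the earlier observation $Q_{k,w_1}=Q_{k,w_2}$ over $\mathbb{Z}$, hence $Q_{w_1,w_2}(x)=0$ in $\mathbb{F}_p$ for every $x$, and the algorithm returns YES with probability $1$. When $w_1\not\equiv_k w_2$, I would bound by a union bound three failure modes: (i) the generated $p$ is composite; (ii) the reduction of $Q:=Q_{k,w_1}-Q_{k,w_2}$ modulo $p$ is the zero polynomial in $\mathbb{F}_p[x]$ while $Q\neq 0$ in $\mathbb{Z}[x]$; and (iii) this reduction is non-zero in $\mathbb{F}_p[x]$, but the random $x$ happens to be a root of it. Mode (i) contributes at most $1/2^{t}\le 1/n^{2}$ by Lemma~\ref{prime}, and mode (iii) contributes at most $(2^{k+1}-1)/2^{t}\le 1/n^{2}$ by Lemma~\ref{SZ}, since $\deg Q\le 2^{k+1}-1$ and $2^{t}\ge 2^{k+1}n^{2}$.

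The main obstacle is mode (ii). The key input is a coefficient bound: each coefficient of $Q_{k,w}$ is of the form $\binom{w}{v}\le \binom{n}{|v|}\le n^{k}$, so every coefficient of $Q$ has absolute value at most $n^{k}$. If the reduction of $Q$ vanishes modulo $p$ while $Q\neq 0$ over $\mathbb{Z}$, then $p$ divides some fixed non-zero integer $N$ with $|N|\le n^{k}$. Using the classical estimate $\omega(m)=\bigo(\log m/\log\log m)$ on the number of distinct prime divisors (or already the crude $\omega(m)\le\log_{2}m$), $N$ has at most $\bigo(k\log n)$ such divisors. By the prime number theorem there are $\Theta(2^{t}/t)$ primes in $[2^{t},2^{t+1})$, so the conditional probability that the random $t$-bit prime lies among the divisors of $N$ is $\bigo(k\log n\cdot t/2^{t})=\bigo(k\log n\cdot(k+\log n)/(2^{k+1}n^{2}))=o(1/n)$. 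The union bound over (i)--(iii) then yields total error $<1/n$ for $n$ large enough, completing the proof.
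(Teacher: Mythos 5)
Your proposal is correct and follows essentially the same route as the paper: the identical running-time decomposition (prime generation plus polynomial evaluation, with the same case split on $k$ versus $\log n$) and the identical three-way union bound over a composite $p$, the polynomial vanishing modulo $p$ via the coefficient bound $n^{k}$ and the $\omega(\cdot)$ estimate, and an unlucky root via Schwartz--Zippel. The only (welcome) addition is that you make the coefficient bound $\binom{w}{v}\le n^{k}$ explicit, which the paper merely asserts.
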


\section{Conclusion}
In this paper we considered the problem of deciding whether two given words $w_1$ and $w_2$ are $k$-binomial equivalent. We gave two polynomial algorithms solving this problem. The first one was deterministic, and was heavily relying on a known result showing that deciding whether two non-deterministic finite automata are path-equivalent can be done in linear time. The second one was a direct algorithm, its running time was linear in the length of the input words, but it was no longer deterministic. 

The main consequence of our result is that also finding all the factors of a long word which are $k$-binomial equivalent to a shorter one can be done in polynomial time; in other words, the problem of pattern matching under $k$-binomial equivalence can be solved in polynomial time. Indeed, one can check (using the algorithms presented in this paper) for all factors of the text whether they are $k$-binomial equivalent to the pattern and return those for which this property holds. The next theorem follows.
\begin{theorem}
Given two words $w$ and $x$ and a number $k$, we can find all the factors of $w$ that are $k$-binomial equivalent to $x$ in polynomial time.
\end{theorem}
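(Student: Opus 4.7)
The plan is to reduce the pattern matching problem to a sequence of calls to the $k$-binomial equivalence tester established in Problem \ref{k-binom-equiv}. Let $n=|w|$ and $m=|x|$. First I would note that two $k$-binomial equivalent words must be abelian equivalent (this is already forced by the case $|v|=1$ in the definition of $k$-binomial equivalence) and hence must have equal length. Consequently, a factor of $w$ can only be $k$-binomial equivalent to $x$ if it has length exactly $m$, so the search reduces to examining the $n-m+1$ factors $w[i..i+m-1]$ with $1\leq i\leq n-m+1$.

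Next, for each candidate factor $f_i=w[i..i+m-1]$, I would invoke either the deterministic polynomial algorithm or the Monte-Carlo algorithm from the previous sections with inputs $f_i$ and $x$, and collect those indices $i$ for which the answer is positive. Each individual call runs in time polynomial in $m$ and $k$, and at most $n$ calls are issued, giving a total running time that is polynomial in $n$ and $k$; for example, using the Monte-Carlo procedure yields an overall running time of $\bigo(nmk^2+nk^4)$.

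The main (minor) obstacle is controlling the cumulative error when the randomised subroutine is used: a naive union bound over $n$ invocations would fail to give a meaningful guarantee. This is remedied in a routine manner by boosting the per-call success probability to $1-1/n^{c+1}$ for any desired constant $c\geq 1$, either by choosing the prime $p$ in the subroutine with slightly more bits or via a constant number of independent repetitions. A union bound then yields total error at most $1/n^{c}$ without affecting the polynomial running time. No new combinatorial ingredient is needed beyond the algorithms already developed in the paper.
\qed
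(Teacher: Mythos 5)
Your proposal is correct and follows essentially the same route as the paper, which likewise reduces the problem to testing each factor of $w$ against $x$ with the equivalence algorithms already developed. Your additional observations---restricting attention to factors of length $|x|$ via abelian equivalence, and boosting the success probability of the randomised subroutine before taking a union bound---are sound refinements that the paper leaves implicit.
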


The main open problems remaining from this work are to find simpler and more efficient algorithms solving Problem \ref{k-binom-equiv} as well as a pattern matching under $k$-binomial equivalence solution that does not use testing $k$-binomial equivalence as a subroutine.

\section*{Acknowledgements}
We thank Manfred Kufleitner and Eric Rowland for participating in the discussions on this problem during the Dagstuhl seminar 14111, that finally lead to the work presented here.

\bibliographystyle{splncs03}
\bibliography{k-binomial}

\end{document}